\documentclass[11pt]{article}
\usepackage{color,graphicx}
\usepackage{amsmath,amssymb}
\usepackage{paralist,xspace}
\usepackage{tikz}
\usetikzlibrary{arrows,decorations.pathreplacing}

\newtheorem{theorem}{Theorem}{}
\newtheorem{lemma}{Lemma}{}
{}
{}
{}
\newtheorem{definition}{Definition}{}
\newenvironment{proof}{\textbf{Proof}}{}

\newcommand{\qed}{\ensuremath{\hfill\Box}}

\newcommand{\PP}{\mathcal{P}}
\newcommand{\NP}{\mathcal{NP}}
\newcommand{\OPT}{\textsc{Opt}}
\newcommand{\opt}{\textsc{Opt}}

\newcommand{\expected}{\mathbb{E}}

\title{A $(2+\epsilon)$-approximation for precedence constrained single machine scheduling with release dates and total weighted completion time objective.}

\author{Ren\'e Sitters\thanks{Vrije Universiteit Amsterdam, \texttt{r.a.sitters@vu.nl} and CWI Amsterdam, \texttt{sitters@cwi.nl}}  \and Liya Yang\thanks{East China University of Science and Technology, \texttt{ly$\_$yang7@163.com}}\thanks{Research done while visiting Vrije Universiteit Amsterdam and supported in part by the China Scholarship Council under grant NO.201606740022} }

\begin{document}
\maketitle

\abstract{We give a $(2+\epsilon)$-approximation algorithm for minimizing total weighted completion time on a single machine under release time and precedence constraints. This settles a recent conjecture made in~\cite{Skutella2016} 
}

\section{Introduction}

We consider the problem of minimizing the total weighted completion time on a single machine under precedence and release time constraints, denoted by $1|r_j,prec|\sum w_jC_j$ in the standard notation from~\cite{GrahamLLR1979}. An instance is given by a set of jobs $J=\{1,2,\dots,n\}$ and for each $j\in J$ an integer processing time $p_j\ge 0$, release time $r_j\ge 0$, and weight $w_j\ge 0$. Further, we are given a partial order $\prec$ on $J$ representing the precedence constraints between the jobs. (The partial order is transitive, i.e., if $h\prec j$ and $j\prec k$ then $h\prec k$.)
A schedule is defined by a start time $S_j\ge 0$ for each $j$ such that no job starts before its release time, i.e., $S_j\ge r_j$, and no two jobs are processed at the same moment, i.e., for any pair $j,k$ either $S_j\ge S_k+p_k$ or $S_k\ge S_j+p_j$, where the latter must hold in case $j\prec k$. 
The cost of a schedule is the weighted sum of job completion times, $\sum_j w_jC_j$, where $C_j=S_j+p_j$, and the goal is to minimize cost.      
We say that an algorithm is an $\alpha$-approximation algorithm ($\alpha\ge 1$) if for any instance the cost of the algorithm's schedule is at most $\alpha$ times the optimal cost.

The special case without release time constraints  ($r_j=0$ for all jobs $j$) has been well studied but the computational complexity is still not completely settled.
Several 2-approximation algorithms are known~\cite{ChekuriMotwani1999},\cite{ChudakHochbaum1999},\cite{HallSSW1997},\cite{MargotQW2003},\cite{Pisaruk2003},\cite{Schulz1996} and Bansal and Khot~\cite{BansalKhot2009} showed that no $(2-\epsilon)$-approximation algorithm exists  assuming a variant of the unique games conjecture is true. It is yet unknown if this lower bound holds under the common assumption $\PP\neq \NP$. For the problem with release dates, a $3$-approximation algorithm was given by  Schulz~\cite{Schulz1996} and Hall et al.~\cite{HallSSW1997}. Schulz and Skutella~\cite{SchulzSkutella1997} gave an $e+\epsilon$-approximation algorithm by sequencing jobs in order of random $\alpha$-points. Recently, Skutella~\cite{Skutella2016} improved the ratio to $\sqrt{e}/(\sqrt{e}-1)<2.542$  and conjectured that a $(2+\epsilon)$-approximation algorithm exists.  See the papers~\cite{ChekuriKhanna2004},\cite{AmbuhlMMS2011}, and~\cite{Skutella2016} for a more detailed overview of approximation results.
Here, we give a positive answer to the conjecture by presenting a polynomial time $(2+\epsilon)$-approximation algorithm for any constant $\epsilon>0$. 

Our algorithm works as follows. First, we show that with loss of a factor $(1+\epsilon)$ we can decompose the problem into subproblems that can be solved independently. The  final schedule is obtained by  placing the schedules for subproblems one after the other. Each subproblem has the property that  all jobs are scheduled in an interval $[L,\beta L]$ where $L>0$ and $\beta>1$ is a constant depending on $\epsilon$ only. This property is exploited to get a $(2+\epsilon)$-approximation for subproblems. The decomposition is done at random but is easy to derandomize.  
For each subproblem we work as follows. We guess the approximate start time  of $O(1/\epsilon)$ jobs and add that information to the LP. Then we solve the LP and apply list scheduling in order of LP-values. For each subproblem we only need a polynomial number  of guesses  and  return the best solution found.

\section{List Scheduling in order of LP-values}\label{sec:LP+LS}

A common technique  for minimizing total weighted completion time in scheduling is to apply list scheduling in an order that is derived from a linear program relaxation. (See for example~\cite{Schulz1996},\cite{HallSSW1997},\cite{PhillipsSW:1998},\cite{SchulzSkutella1997}.)
In list scheduling, all jobs are in a list and are added to the schedule one by one in an order derived from the list. In the presence of release dates, there are two intuitive versions of List Scheduling. The most natural one is to schedule job as early as possible precisely in the order of the list. Note that this may cause the machine to stay idle while jobs (later in the list) are available. 
Schulz~\cite{Schulz1996} showed that for this version, List Scheduling in order of LP-values is a 3-approximation algorithm for $1|r_j,prec|\sum w_jC_j$. Alternatively, one may schedule at any moment that the machine is idle, the job that comes earliest in the list  \emph{among the available jobs}.  It is this latter variant that we use here. Possibly, our approach in this paper can be modified to work with the natural version of List Scheduling but we do not see a proof that is  substantially easier than the proof presented here.
In our algorithm we shall relabel jobs before applying list scheduling, i.e., we assume w.l.o.g that the ordered list is $1,2,\dots,n$. Further, we assume this order is consistent with the precedence constraints, i.e., $j<k$ whenever $j\prec k$. 

Given a (partial) schedule, we say that the machine is \emph{available} at time $t$ if for any job in the schedule either $C_j\le t$ or $S_j\ge t$. We say that a job $j$ is \emph{available} at time $t$ if (i) $r_j\le t$, (ii) job $j$ was not scheduled yet, (iii) all jobs $k$ with $k\prec j$ have been completed.

\bigskip 

\noindent \textsc{Algorithm List Scheduling (LS):}\\
Let the jobs $J=\{1,2,\dots,n\}$ be labeled such that $j<k$ whenever $j\prec k$. At any moment $t$ that the machine is available, start the job with the smallest index $j$ among the available jobs.\\

We assume w.l.og. that for any given instance, release dates are consistent with precedence constraints, i.e., we assume that $r_j\le r_k$ if $j\prec k$. (In Section~\ref{sec:bounded}, we modify release dates but maintain consistency.)

\begin{lemma}\label{lem:LSprop}
If release dates are consistent with precedence constraints, i.e., $r_j\le r_k$ if $j\prec k$, then LS has the following property:\\
If at time $t$ the machine is available and there is is a job $j$ with $r_j\le t$ and job $j$ has not started yet, then LS starts some job $h\le j$ at time $t$.
\end{lemma}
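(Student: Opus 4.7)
The plan is to locate an ancestor of $j$ (in the precedence order) that is actually available at time $t$, and then invoke the definition of LS.

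First I would dispose of the easy case: if $j$ itself is available at time $t$, then by definition LS starts the smallest-indexed available job, which has index at most $j$, and we are done. So assume $j$ is not available. Since $r_j\le t$ and $j$ has not started yet, non-availability must come from having an uncompleted predecessor.

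Next, let $A$ be the set of all jobs in $\{j\}\cup\{k : k\prec j\}$ that have not been started by time $t$; this set is nonempty since $j\in A$. Choose $h\in A$ that is minimal with respect to $\prec$. The key claim is that $h$ is available at time $t$. Condition (i) $r_h\le t$ follows from consistency of release dates with precedence: either $h=j$ (then $r_h=r_j\le t$) or $h\prec j$ and so $r_h\le r_j\le t$. Condition (ii) is immediate from $h\in A$. For condition (iii), take any $g\prec h$; by transitivity $g\prec j$, so $g$ belongs to the set of predecessors of $j$. By minimality of $h$ in $A$, $g\notin A$, meaning $g$ was started at some time $t'<t$. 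Since the machine is available at $t$, every job started before $t$ must have completed by $t$; thus $g$ is completed by time $t$.

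Having established that $h$ is available at time $t$, LS starts at time $t$ the available job of smallest index, which has index at most $h$. Finally, $h\le j$: either $h=j$, or $h\prec j$ and the label ordering assumed to be consistent with $\prec$ gives $h<j$. The only subtle point—and the one I would emphasize carefully—is step (iii), where one uses precisely the hypothesis ``the machine is available at $t$'' to convert ``$g$ has been started'' into ``$g$ has been completed''. Everything else is bookkeeping on the partial order.
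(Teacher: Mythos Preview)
Your argument is correct and follows the same idea as the paper: locate an available job among $j$ and its predecessors, then use the list-scheduling rule. The paper compresses this into a single line (``if $j$ is not available at time $t$ then there must be some $h\prec j$ that is available at time $t$''), whereas you make the minimal-element construction and the use of machine availability explicit; no substantive difference.
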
   
\begin{proof}
For any job $h\prec j$ it holds that $r_h\le r_j$. Hence, if  $j$ is not available at time $t$ then there must be some job $h\prec j$ that is available  at time $t$.
\qed\end{proof}\bigskip

Linear programs for scheduling problems typically use three types of variables: Time-indexed variables (indicating wether job $j$ is processed at time $t$), linear ordering variables (indicating whether job $j$ precedes job $k$), and completion time variables. An elegant LP-fomulation for our problem was introduced by Queyranne~\cite{Queyranne1993} and is based on completion time variables only. It was later refined by Goemans~\cite{Goemans96},\cite{Goemans97} to handle release times. Altough the number of constraints in linear program (LP) is exponential, they can separated in polynomial time by efficient submodular function minimization~\cite{Goemans96}

\begin{align}
\text{(LP) }\min &\  Z=\sum\limits_{j=1}^n w_jC_j& \nonumber\\[3mm] 
s.t.&\  C_j\le C_k& \text{ all pairs } j\prec k\label{eq:LPconstraint1} \\[3mm] 
& \ \sum\limits_{j\in U} p_jC_j\ge r_{\min}(U)p(U)+\frac{1}{2}p(U)^2&\text{ all } U\subseteq J.\label{eq:LPconstraint2}
\end{align}\bigskip

In constraint~\eqref{eq:LPconstraint2}, $p(U)=\sum_{j\in U}p_j$ and $r_{\min}(U)=\min\{r_j\mid j\in U\}$. For later use, we define $r_{\max}(U)=\max\{r_j\mid j\in U\}$ and (given a solution) define the values $C_{\min}(U)$ and $C_{\max}(U)$ in the obvious way. The linear program (LP) can obviously be strengthened by adding the constraint $C_j\ge r_j+p_j$. For our analysis it is enough to have the following implied inequality.

\begin{lemma}\label{lem:LB1LP}
$C_j\ge r_j+p_j/2$ for any $j\in J$.
\end{lemma}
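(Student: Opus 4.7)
The plan is to instantiate constraint~\eqref{eq:LPconstraint2} at the singleton set $U=\{j\}$. With $U=\{j\}$ we have $p(U)=p_j$ and $r_{\min}(U)=r_j$, so the constraint reads
\[
p_j C_j \;\ge\; r_j\, p_j + \tfrac{1}{2}\, p_j^{\,2}.
\]
Assuming $p_j>0$, dividing both sides by $p_j$ yields $C_j \ge r_j + p_j/2$, which is exactly the claimed inequality. (The degenerate case $p_j=0$ makes the inequality $C_j\ge r_j$, which is immediate from interpreting $C_j$ as a completion time; in any case the LP can be preprocessed to discard jobs with $p_j=0$, or one may add $C_j\ge r_j+p_j$ explicitly as mentioned just above the lemma without affecting the analysis.)

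There is essentially no obstacle here: the whole point is that the exponential family of parallel-inequalities already contains, at its weakest instance $|U|=1$, the desired per-job bound. I would simply write one displayed line substituting $U=\{j\}$ into~\eqref{eq:LPconstraint2} and a second line dividing through by $p_j$, then conclude with \qed.
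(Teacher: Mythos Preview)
Your proposal is correct and is essentially identical to the paper's own proof: both take $U=\{j\}$ in~\eqref{eq:LPconstraint2} and read off $p_jC_j\ge p_j(r_j+p_j/2)$. Your extra remark on the degenerate case $p_j=0$ is a minor addition the paper omits, but the approach is the same.
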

\begin{proof}
Let $U=\{j\}$. Then the second LP-constraint~\eqref{eq:LPconstraint2} states
\[p_jC_j\ge p_j(r_j+p_j/2).\]
\qed\end{proof}

For any schedule $\sigma$ and $U\subseteq J$ the inequality $p(U)\le C_{\max}^{\sigma}(U)-r_{\min}(U)$ clearly must hold. The following similar but weaker inequality is implied by (LP).
 
\begin{lemma}\label{lem:lowerboundonLP}
$p(U)\le 2C_{\max}(U)-2r_{\min}(U)$ for any $U\subseteq J$.
\end{lemma}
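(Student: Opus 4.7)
The plan is to derive the bound directly from LP-constraint~\eqref{eq:LPconstraint2}, using the trivial observation that every term on its left-hand side can be bounded in terms of $C_{\max}(U)$.

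First I would start from the constraint
\[
\sum_{j\in U} p_j C_j \ge r_{\min}(U)\, p(U) + \tfrac{1}{2}\, p(U)^2.
\]
Since $C_j \le C_{\max}(U)$ for every $j\in U$ (and $p_j\ge 0$), the left-hand side is at most
\[
\sum_{j\in U} p_j\, C_{\max}(U) = p(U)\, C_{\max}(U).
\]
Combining these two inequalities gives
\[
p(U)\, C_{\max}(U) \;\ge\; r_{\min}(U)\, p(U) + \tfrac{1}{2}\, p(U)^2.
\]
Assuming $p(U)>0$, I would then divide by $p(U)$ and rearrange to get $p(U) \le 2 C_{\max}(U) - 2 r_{\min}(U)$, which is exactly what we want.

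The only thing to check separately is the degenerate case $p(U)=0$. Then the claim reduces to $C_{\max}(U)\ge r_{\min}(U)$, and that follows immediately from Lemma~\ref{lem:LB1LP}: for any $j\in U$, $C_{\max}(U)\ge C_j \ge r_j + p_j/2 \ge r_{\min}(U)$. I do not expect any real obstacle here; the argument is a one-line manipulation of the LP constraint together with the crude pointwise bound $C_j\le C_{\max}(U)$.
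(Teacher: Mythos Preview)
Your argument is correct and is essentially identical to the paper's own proof: bound the left-hand side of~\eqref{eq:LPconstraint2} by $p(U)\,C_{\max}(U)$ and divide through by $p(U)$. The only difference is that you treat the degenerate case $p(U)=0$ explicitly, which the paper silently ignores.
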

\begin{proof}
From the second LP-constraint we have 
\[p(U)C_{\max}(U)\ge \sum\limits_{j\in U} p_jC_j\ge r_{\min}(U)p(U)+\frac{1}{2}p(U)^2,\]
for any $U\subseteq J$. Dividing both sides by $P(U)$ gives 
$C_{\max}(U)\ge r_{\min}(U)+\frac{1}{2}p(U)$ which is the inequality of the lemma.
\qed\end{proof}\bigskip

\noindent \textsc{Algorithm LP+LS:}
\begin{itemize}
\item[1)] Solve the linear program (LP). Relabel such that $C_1\le ...\le C_n$ and such that $j<k$ if $j\prec k$. 
\item[2)] Run list scheduling (LS) in the order $1,\dots,n$. Let $C^{\sigma}_j$ be the completion time of job $j$ in the final schedule $\sigma$.
\end{itemize}\bigskip

%Note that the relabeling according to precedence constraints in Step 1 is needed for Step 2 and is indeed possible. It is not implied directly since we could have $C_j=C_k$ for a pair $j\prec k$. 

The algorithm as defined above has an unbounded approximation ratio as shown by the following example. Let $p_1=1,r_1=1$, $p_2=M,r_2=0, w_1=M$ and $w_2=0$. For large $M$, the optimal schedule places jobs in the order $1,2$ and has value $2M$. The algorithm however, will schedule job $2$ first since it is the only available job at time $0$ which gives value $(M+1)M$.  If jobs are relatively small compared to their release time ($p_j\le r_j$ for all $j$) then the algorithm is a $2$-approximation as we show below. The proof follows easily from the lemma below. We shall use this lemma again in Section~\ref{sec:bounded}.

\begin{lemma}\label{lem:lemma}
Let $\sigma$ be a schedule returned by algorithm LP+LS. Let $j\in J$ and let $t$ be the smallest value such that the interval $[t,C^{\sigma}_j]$ has no idle time and only contains jobs $h\le j$. Let $U$ be the set of jobs processed in the interval $[t,C^{\sigma}_j]$. Then,
\begin{equation}\label{eq:lem_eq1}
C_j^{\sigma}\le t+2C_j-2r_{\min}(U).
\end{equation}
Further, if no job completes at time $t$ then
\begin{equation}\label{eq:lem_eq2}
C_j^{\sigma}\le 2C_j.
\end{equation}
If some job $k$ completes at time $t$ then 
\begin{equation}\label{eq:lem_eq3}
r_{\min}(U)> s,
\end{equation}
where $s$ is the start time of the job $k$. 
\end{lemma}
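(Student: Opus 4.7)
The plan is to derive all three inequalities essentially from two observations: the no-idle-time assumption forces the interval $[t, C_j^\sigma]$ to have length exactly $p(U)$, and the relabeling performed by LP+LS guarantees $C_h \le C_j$ for every $h \le j$, in particular for every $h \in U$. The length statement needs a brief argument via minimality of $t$: if any job straddled $t$ with one endpoint before $t$ and the other after, it would either have index $\le j$ (and then $t$ could be decreased to its start time, contradicting minimality) or index $>j$ (contradicting the defining property of $t$). Hence some job of $U$ starts exactly at $t$ and $p(U) = C_j^\sigma - t$. Combining this with Lemma~\ref{lem:lowerboundonLP} applied to $U$ and the bound $C_{\max}(U) \le C_j$ gives $C_j^\sigma = t + p(U) \le t + 2C_{\max}(U) - 2r_{\min}(U) \le t + 2C_j - 2r_{\min}(U)$, which is \eqref{eq:lem_eq1}.

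To obtain \eqref{eq:lem_eq2} from \eqref{eq:lem_eq1} it is enough to show that $r_{\min}(U) \ge t$, because then $C_j^\sigma \le t + 2C_j - 2t = 2C_j - t \le 2C_j$. Since the first job of $U$ starts exactly at $t$ and no job of $\sigma$ completes at $t$ by hypothesis, the machine must be idle on some open interval $(\tau, t)$ ending at $t$ (the alternative, that a job straddles $t$, was already ruled out). Fix any $t' \in (\tau, t)$. For every $k \in U$ the job $k$ is still unscheduled at time $t'$, so by Lemma~\ref{lem:LSprop} the assumption $r_k \le t'$ would force LS to start some job at time $t'$, contradicting the fact that the machine is idle. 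Hence $r_k > t'$ for every $k \in U$, and taking the limit $t' \to t$ yields $r_{\min}(U) \ge t$.

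For \eqref{eq:lem_eq3}, minimality of $t$ forces the job $k$ completing at $t$ to satisfy $k > j$: otherwise $[s, C_j^\sigma]$ would also be idle-free and contain only jobs of index $\le j$, contradicting the choice of $t$. At time $s$ the machine is available, since it is about to start $k$. Pick any $h \in U$; then $h \le j < k$ and $h$ has not been started by time $s < t$. Applying Lemma~\ref{lem:LSprop} to $h$ at time $s$: if $r_h \le s$ then LS would start some job of index $\le h$ at time $s$, which is incompatible with LS actually starting the larger-indexed $k$. Hence $r_h > s$ for every $h \in U$, i.e.\ $r_{\min}(U) > s$.

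The only slightly delicate piece I anticipate is the extraction of a genuine idle interval ending at $t$ in the proof of \eqref{eq:lem_eq2}, where both the minimality of $t$ and the no-straddling observation are needed. Once that bookkeeping is in place, Lemma~\ref{lem:LSprop} is the engine that produces both \eqref{eq:lem_eq2} and \eqref{eq:lem_eq3} in just a few lines each, and \eqref{eq:lem_eq1} reduces cleanly to Lemma~\ref{lem:lowerboundonLP}.
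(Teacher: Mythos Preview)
Your proof is correct and follows essentially the same route as the paper's: \eqref{eq:lem_eq1} via Lemma~\ref{lem:lowerboundonLP} together with $C_{\max}(U)=C_j$, and both \eqref{eq:lem_eq2} and \eqref{eq:lem_eq3} via Lemma~\ref{lem:LSprop}. The only quibble is that your idle-interval argument for \eqref{eq:lem_eq2} tacitly assumes $t>0$; the paper treats $t=0$ as a separate (trivial) case, and you should too.
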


\begin{center}
\begin{tikzpicture}[scale=0.65]
\draw[line width=0.6pt]  (-1,0) rectangle (1,1);
\draw[line width=0.6pt]  (1,0) rectangle (2,1);
\draw[line width=0.6pt] (2,0) rectangle (3.5,1);
\draw[line width=0.6pt] (3.5,0) rectangle (4.5,1);
\draw[line width=0.6pt] (4.5,0) rectangle (6.5,1);

\draw[line width=0.6pt] (-4,0) -- (7,0);
\node at (-4,-0.5) { 0}; 
\node at (-1,-0.5) { $s$ };
\node at (1,-0.5) { $t$};
\node at (5.5,0.5) { $j$}; 
\node at (0,0.5) { $k$};
\node at (6.5,-0.5) { $C_j^{\sigma}$};

\draw[decorate,decoration={brace,amplitude=7pt}] (1,1.2)  -- (6.5,1.2) ; 
\node at (3.75,1.8) { $U$};

\end{tikzpicture}
\end{center}

\noindent\begin{proof}
Note that $U\neq \emptyset$ since $j\in U$. Further, $C_j=C_{\max}(U)$ since only jobs with $h\le j$ are in $U$ and jobs are relabeled in Step 1.  
Now~\eqref{eq:lem_eq1} follows directly from Lemma~\ref{lem:lowerboundonLP}.  
\[
C_j^{\sigma}=t+P(U)\le t+2C_{max}(U)-2r_{\min}(U) = t+2C_j-2r_{\min}(U).
\]
If no job completes at  time $t$ then either $t=0$ or the machine is idle just before time $t$. In the former case, it follows from~\eqref{eq:lem_eq1} that 
$C_j^{\sigma}\le  0+2C_j-2r_{\min}(U)\le 2C_j$. In the latter case it follows from Lemma~\ref{lem:LSprop} that $t=r_{\min}(U)$ which, together with~\eqref{eq:lem_eq1}, implies 
$C_j^{\sigma}\le  2C_j-r_{\min}(U)\le 2C_j$.

Now assume job $k$ completes at time $t$. 
If $r_{\min}(U)\le s$ then, by Lemma~\ref{lem:LSprop} some job $h\le j$ must start at time $s$. However, $k>j$. Hence we must have
$r_{\min}(U)> s$.
\qed\end{proof}

\begin{theorem}\label{lem:2-approx}
If $p_j\le r_j$ for all $j\in J$ then algorithm LP+LS is a 2-approximation. 
\end{theorem}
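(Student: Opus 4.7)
The plan is to show that for every job $j$ the algorithm satisfies $C_j^{\sigma}\le 2C_j$, where $C_j$ denotes the LP-value. Summing the weighted inequalities and using that the LP is a relaxation then gives $\sum_j w_jC_j^{\sigma}\le 2\sum_j w_jC_j\le 2\cdot\OPT$, which is the claim.

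To prove $C_j^{\sigma}\le 2C_j$, I would fix $j$ and invoke Lemma~\ref{lem:lemma} for this job, obtaining a time $t$, the set $U$ of jobs processed in $[t,C_j^{\sigma}]$, and the key bound $C_j^{\sigma}\le t+2C_j-2r_{\min}(U)$. There are two cases to consider. In the easy case, no job completes at time $t$, and the lemma already gives $C_j^{\sigma}\le 2C_j$, so nothing more is needed.

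The main case, and essentially the only point where the hypothesis $p_h\le r_h$ is used, is when some job $k$ completes at time $t$. By the definition of $t$ we must have $k>j$, and by~\eqref{eq:lem_eq3} in Lemma~\ref{lem:lemma} its start time $s$ satisfies $s<r_{\min}(U)$. Writing $t=s+p_k$, the bound of the lemma becomes
\[
C_j^{\sigma}\le 2C_j-2r_{\min}(U)+s+p_k.
\]
Now I use two facts about job $k$: its start time is no earlier than its release date, so $s\ge r_k$, and by assumption $p_k\le r_k\le s$. Combining these yields $s+p_k\le 2s<2r_{\min}(U)$, so the right-hand side above is strictly less than $2C_j$. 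This settles the case and closes the proof.

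The argument is essentially a careful bookkeeping on top of Lemma~\ref{lem:lemma}; no real obstacle is expected beyond correctly exploiting the hypothesis in the ``completion at $t$'' case, where the chain $p_k\le r_k\le s<r_{\min}(U)$ is exactly what is needed to absorb the extra $s+p_k$ term in the bound.
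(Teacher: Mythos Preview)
Your proof is correct and follows essentially the same route as the paper: invoke Lemma~\ref{lem:lemma}, split into the two cases according to whether a job completes at $t$, and in the nontrivial case use $p_k\le r_k\le s<r_{\min}(U)$ to bound $t=s+p_k\le 2s<2r_{\min}(U)$, then sum. The only cosmetic difference is that the paper phrases the final inequality as $t\le 2s$ and then subtracts $2r_{\min}(U)>2s$, whereas you bound $s+p_k$ directly; these are the same argument.
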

\begin{proof}
Apply Lemma~\ref{lem:lemma} to an arbitrary job $j$. If no job completes at time $t$ then by~\eqref{eq:lem_eq2} $C_j^{\sigma}\le 2C_j$.
On the other hand, if some job $k$ completes at time $t$ then by~\eqref{eq:lem_eq1} and~\eqref{eq:lem_eq3} 
\[C_j^{\sigma}\le t+2C_j-2r_{\min}(U)< t+2C_j-2s.\]
Since $p_k\le r_k$ we have $t=s+p_k\le s+r_k\le 2s$. Hence, $C_j^{\sigma}\le 2C_j$.
Now take the weighted sum over all jobs: 
\[\sum_j w_jC^{\sigma}_j\le 2\sum_j w_jC_j=2Z_{LP}\le  2\opt.\] 
\qed\end{proof}

Given the theorem above, the following approach leads intuitively to a 2-approximation algorithm. Imagine an unknown optimal schedule $\sigma'$ and for each job $j$ guess if it starts before time $p_j$ in $\sigma'$ and if so, guess its precise start time. Then add this information to the LP and run algorithm LP+LS. Clearly, the running time is not polynomial in general since there can be $O(n)$ of those jobs. 
However, in the next section we show that, with loss of a factor $1+\epsilon$ in the approximation, one can decompose any instance $I$ into subinstance $I_1,I_2,\dots,$ such that our guessing approach is polynomial for each of the subinstances.

\section{A decomposition theorem}\label{sec:decompose}

Margot et al.~\cite{MargotQW2003} and Chekuri and Motwani~\cite{ChekuriMotwani1999} used Sidney's decomposition theorem~\cite{Sidney1975} to show that any instance $I$ of $1 | prec | \sum w_jC_j$  can be split into subinstances $I_1,\dots,I_q$ such that (i) for any subinstance $I_i$  all initial sets $U$ satisfy $p(U)/w(U) \ge  p(I_i)/w(I_i)$ and (ii) 
if $\sigma_1,\dots,\sigma_q$ are $\alpha$-approximate schedules for the subinstance then placing the schedules in this order yields an $\alpha$-approximate schedule for $I$, for any $\alpha\ge 1$. Subsequently, the authors present combinatorial 2-approximation algorithms for instances with this initial set property. 

In the presence of release times, Sidney's decomposition does not hold. Here, we present a different decomposition theorem (Theorem~\ref{th:decomposition}) that is useful in case of release times. The underlying algorithm is given below. Note that we use the common approach of partitioning time into intervals of geometrically increasing length. In general, such a partitioning gives a significant loss in the approximation ratio. We use two techniques to ensure that the (expected) loss is no more than a factor $1+\epsilon$.  First, the factor of increase is taken exponential in $1/\epsilon$. 
This idea was used before in~\cite{Sitters2014} to get a $(1+\epsilon)$-approximation for $1 | prec | \sum w_jC_j$ in case of interval-ordered precedence constraints. Secondly, we use the LP to partition the job set $J$ into subsets $J_i$ where  $J_i$ is the set of jobs that have their $C_j$ in the  $i$-th interval. Using the LP ensures that the schedule for $J_i$ is only a constant factor longer than the length of the $i$-th interval (Lemma~\ref{lem:interval}). This property, together with the large factor of increase ($e^{3/\epsilon}$) and the randomness, ensures that the expected delay due to this partitioning is only a factor $1+\epsilon$.

For the ease of analysis we shall assume that there is no initial set $U$ with $p(U)=0$ since such set can be scheduled at time $0$ and hence can be removed from the instance. What we get from this is that in any LP-solution, $C_j\ge 0.5$ for all $j$ (using Lemma~\ref{lem:LB1LP} and $p_j$ integer).

\bigskip 

\noindent \textsc{Algorithm Decompose:}
\begin{itemize}
\item[1)] Solve the linear program (LP). Let $C_1,\dots,C_n$ be the LP-values. 
\item[2)] Let $a=3/\epsilon$ and take $b$ uniformly at random from $[0,a]$. Let $t_i=e^{a(i-3)+b}$ for $i=1,2,\dots,q$. Choose $q$ large enough such that $C_{max}(J)\le t_q$. Partition the jobs into $J_i=\{j|t_i\le C_j<t_{i+1}\}$, $i\in\{1,2,\dots,q\}$. 
\item[3)] Let $I_i$ be the scheduling instance defined by jobs $J_i$ with the additional constraint that no job  is allowed to start before time $3t_i$. For each $i$, run the algorithm described in Section~\ref{sec:bounded} and let $\sigma_i$ be the schedule returned.
\item[4)] Return $\sigma$ which is the concatenation of $\sigma_1,\dots,\sigma_q$.
\end{itemize}\bigskip
 
First, let us see that the algorithm returns a feasible schedule. Step 1 is the same as in algorithm LP+LS except that relabelling is not needed here. To ensure that $t_1\le C_{\min}(J)$ we used $i-3$ in stead of just $i$ in Step 2.  Note that $t_1=e^{b-2a}\le e^{-a}< 0.5\le C_{\min}(J)$. Hence, Step 2 defines a partition of $J$.
Further, since the partial order $\prec$ is transitive, the instances of Step 3 are well defined. (The precedence constraints between jobs in $J_i$ are the same as in $J$.) Finally, note that the first LP-constraint~\eqref{eq:LPconstraint1} ensures that the precedence constraints are satisfied in $\sigma$ since if $k\prec j$ then $C_k\le C_j$ and  $k$ will be scheduled before $j$ in $\sigma$. Hence, $\sigma$ is feasible if we place the partial schedules in the order $\sigma_1,\dots,\sigma_q$ and shift schedules forward in case of overlap.
However, we will show below that schedule $\sigma_i$ is contained in the interval $[3t_i,3t_{i+1}]$.
That means we can simply take the union of the $\sigma_i's$ and do not need to shift.

\subsection{Analysis} 
Say that a schedule is \emph{tight} if no job can be  shifted to the left (scheduled earlier) while maintaining feasibility and without shifting any of the other jobs. Clearly, a non-tight schedule can be made tight by checking each job. Hence we may assume that the schedules $\sigma_i$ returned by the algorithm described in Section~\ref{sec:bounded} are tight.

\begin{lemma}\label{lem:interval} Any tight schedule for $I_i$ is contained in the interval $[3t_i,3t_{i+1}]$.
\end{lemma}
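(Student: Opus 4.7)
The lower endpoint is trivial: by the constraint imposed on $I_i$ in Step 3 of Decompose, no job may start before time $3t_i$, so $\sigma_i$ is contained in $[3t_i,\infty)$. All the work is in the upper bound, namely showing that the last completion time $C^*$ of $\sigma_i$ satisfies $C^* \le 3t_{i+1}$.

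My plan is to let $j^*$ be the last-completing job in $\sigma_i$, set $C^* = C^{\sigma_i}_{j^*}$, and let $t^* \le C^*$ be the smallest time such that the interval $[t^*, C^*]$ is idle-free. Let $U^* \subseteq J_i$ denote the set of jobs processed in this interval, so that $C^* = t^* + p(U^*)$. I would then prove $t^* \le t_{i+1}$ and $p(U^*) < 2t_{i+1}$ separately, which gives $C^* < 3t_{i+1}$.

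For the bound $t^* \le t_{i+1}$ I would do a case analysis on what sits just to the left of $t^*$. If $t^*$ equals the very start of $\sigma_i$, then $t^* = 3t_i$, and since $a = 3/\epsilon$ is chosen large enough that $e^a \ge 3$ (without loss of generality $\epsilon \le 3/\ln 3$), we have $3t_i \le t_{i+1}$. Otherwise the machine is idle immediately before $t^*$, so tightness of $\sigma_i$ forces the job $\tilde{j}$ starting at $t^*$ to have been unavailable just before. During an idle interval, the only way a job can become available at $t^*$ is by its release date arriving at $t^*$: a predecessor constraint cannot be the cause, as a running predecessor would contradict idleness just before $t^*$. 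Hence $t^* = r_{\tilde{j}}$, and since $\tilde{j} \in J_i$, Lemma~\ref{lem:LB1LP} gives $t^* = r_{\tilde{j}} \le C_{\tilde{j}} < t_{i+1}$. For $p(U^*) < 2t_{i+1}$ I would apply Lemma~\ref{lem:lowerboundonLP} to $J_i$, obtaining $p(J_i) \le 2C_{\max}(J_i) - 2r_{\min}(J_i) < 2t_{i+1}$, and note $p(U^*) \le p(J_i)$.

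I do not expect any serious obstacle; the main subtlety is the case analysis at $t^*$, specifically the observation that idle time immediately before $t^*$ can only be caused by an unreleased job and not by a precedence constraint (since in the latter case the predecessor would be running and the machine would not be idle). One should also verify the mild assumption $e^{3/\epsilon} \ge 3$, which is harmless in a $(2+\epsilon)$-approximation.
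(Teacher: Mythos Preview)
Your proposal is correct and follows essentially the same approach as the paper. The paper's argument is slightly more direct: it bounds the makespan of any tight schedule for $I_i$ by $\max\{3t_i, r_{\max}(J_i)\} + p(J_i)$ in one line (which is exactly what your case analysis at $t^*$ establishes), then uses $r_{\max}(J_i)\le C_{\max}(J_i)<t_{i+1}$ and $p(J_i)\le 2C_{\max}(J_i)\le 2t_{i+1}$ together with $3t_i\le t_{i+1}$, just as you do.
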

\begin{proof}
We assumed here that $\epsilon$ is small enough. To be precise, we assume $\epsilon \le 3/\ln 3$ since then $t_{i+1}=e^{3/\epsilon}t_i\ge 3t_i$. 
By definition, no job starts before time $3t_i$. 
Since the schedule is tight the last job completes latest at time 
\[\max\{3t_i,r_{\max}(J_i)\}+P(J_i).\]
Note that $r_{\max}(J_i)\le t_{i+1}$
since $r_j\le C_j\le t_{i+1}$ for all jobs $j\in J_i$. Also, $3t_i\le t_{i+1}$ as we showed above.
Further, by Lemma~\ref{lem:lowerboundonLP}, 
\[P(J_i)\le 2C_{\max}(J_i)\le 2t_{i+1}.\]
Hence, the last job completes latest at time  \[\max\{3t_i,r_{\max}(J_i)\}+P(J_i) \le t_{i+1}+2t_{i+1}=3t_{i+1}.\]
\qed\end{proof}

Let $\OPT_i$ be the optimal value of instance $I_i$. Now, consider an optimal schedule $\sigma^*$ for $I$ and let $\opt|_i$ be the contribution of $J_i$ in the optimal schedule. That means, 
$\opt|_i=\sum_{j\in J_i} w_jC_j^*$, where $C_j^*$ is the completion time of job $j$ in $\sigma^*$. 
 A feasible schedule for $I_i$ is obtained by removing the jobs not in $J_i$ from $\sigma^*$ and shifting the remaining schedule forward  by at most $3t_i$. Hence we get the following bound.
\begin{lemma}\label{lem:shift_opt}
 $\OPT_i\le \opt|_i+3t_i\sum_{j\in J_i}w_j$.
\end{lemma}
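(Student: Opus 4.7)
My plan is to prove the bound by explicit construction: exhibit a feasible schedule for $I_i$ whose cost is at most $\opt|_i + 3t_i\sum_{j\in J_i}w_j$, which then upper-bounds $\OPT_i$. The natural candidate, as hinted by the paper, is to take the optimal schedule $\sigma^*$ for $I$, restrict attention to the jobs in $J_i$, and shift each of those jobs forward in time by exactly $3t_i$. Concretely, define a new schedule $\sigma'$ on $J_i$ by setting the start time $S'_j = S^*_j + 3t_i$ for every $j\in J_i$, where $S^*_j$ is the start time in $\sigma^*$.

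Next I would check that $\sigma'$ is feasible for the instance $I_i$. There are four conditions to verify. (1) Release dates: since $S^*_j \ge r_j$ in $\sigma^*$, we have $S'_j = S^*_j + 3t_i \ge r_j$. (2) Precedence constraints: within $J_i$ they are inherited from $I$ (transitivity of $\prec$ was already used when defining $I_i$), and a uniform shift preserves relative ordering, so if $h\prec k$ holds in $J_i$ then $S'_k = S^*_k+3t_i \ge C^*_h + 3t_i = C'_h$. (3) No two jobs overlap: a uniform shift by $3t_i$ preserves disjointness of processing intervals, and removing the jobs outside $J_i$ can only create idle time, never conflicts. (4) The additional constraint imposed by $I_i$ that no job starts before $3t_i$: this is satisfied since $S^*_j \ge 0$ gives $S'_j \ge 3t_i$.

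Finally, the cost of $\sigma'$ is easy to compute: each completion time satisfies $C'_j = S'_j + p_j = C^*_j + 3t_i$, so
\[
\sum_{j\in J_i} w_j C'_j \;=\; \sum_{j\in J_i} w_j C^*_j \;+\; 3t_i \sum_{j\in J_i} w_j \;=\; \opt|_i + 3t_i\sum_{j\in J_i} w_j.
\]
Since $\OPT_i$ is the minimum cost over all feasible schedules for $I_i$, this gives the claimed inequality. There is really no serious obstacle here; the lemma is a direct feasibility argument and the only point worth double-checking is that the precedence constraints in $I_i$ are exactly those of $I$ restricted to $J_i$, which was already built into the definition of the subinstance in Step~3 of \textsc{Decompose}.
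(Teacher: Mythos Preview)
Your proposal is correct and is exactly the argument the paper gives (stated in the sentence immediately preceding the lemma): restrict $\sigma^*$ to $J_i$ and shift forward by $3t_i$. The only cosmetic difference is that the paper phrases the shift as ``at most $3t_i$'' while you shift by exactly $3t_i$; your version is if anything cleaner, and the resulting bound is identical.
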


The value $\opt_i$ depends on the random variable $b$ which defines the partition. For job $j$ let $i(j)$ be such that $j\in J_{i(j)}$, that means, $t_{i(j)}\le C_j<  t_{i(j)+1}$. Note that $t_{i(j)}$ is a stochastic variable of the form $t_{i(j)}=e^{-x}C_j$ where $x$ is uniform on $[0,a]$. 
\begin{equation}\label{eq:t_i(j)}
\expected[t_{i(j)}] = C_j\expected[e^{-x}]=\frac{C_j}{a}\int\limits_{x=0}^{x=a}e^{-x}dx=\frac{C_j(1-e^{-a})}{a}<\frac{C_j}{a}=\frac{\epsilon C_j}{3}.
\end{equation}

\begin{lemma}\label{lem:EOPTi}
$\expected[\sum_i \OPT_i]\le (1+3/a)\OPT=(1+\epsilon)\OPT$.
\end{lemma}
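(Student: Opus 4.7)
The plan is to sum the per-instance bound from Lemma~\ref{lem:shift_opt} over $i$, then take the expectation and plug in the estimate~\eqref{eq:t_i(j)}. First, summing Lemma~\ref{lem:shift_opt} over $i$ gives
\[
\sum_i \OPT_i \;\le\; \sum_i \opt|_i \;+\; 3\sum_i t_i \sum_{j\in J_i} w_j.
\]
The first term telescopes: since $\{J_i\}$ is a partition of $J$, $\sum_i \opt|_i = \sum_{j\in J} w_j C_j^* = \OPT$. For the second term I would reindex by job: since $j\in J_{i(j)}$ contributes weight $w_j$ with coefficient $t_{i(j)}$, the double sum equals $3\sum_{j} w_j t_{i(j)}$.

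Next I take expectations over the random offset $b$. By linearity and the bound~\eqref{eq:t_i(j)}, which gives $\expected[t_{i(j)}] < C_j/a$, we obtain
\[
\expected\!\left[\sum_i \OPT_i\right] \;\le\; \OPT + 3\sum_j w_j\,\expected[t_{i(j)}] \;<\; \OPT + \frac{3}{a}\sum_j w_j C_j.
\]
Finally, since $C_1,\dots,C_n$ come from the optimal LP-solution, $\sum_j w_j C_j = Z_{LP} \le \OPT$. Substituting and recalling $a=3/\epsilon$ yields $\expected[\sum_i \OPT_i] \le (1+3/a)\OPT = (1+\epsilon)\OPT$, as claimed.

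There is no real obstacle here: the only thing to be careful about is the reindexing step (swapping the outer sum over intervals for a sum over jobs via the identity $\sum_i t_i \mathbf{1}[j\in J_i] = t_{i(j)}$), and the invocation of the LP bound $Z_{LP}\le \OPT$ at the end. The heavy lifting has already been done in~\eqref{eq:t_i(j)}, where the geometric partitioning together with the random shift $b$ provided the crucial $1/a$ factor.
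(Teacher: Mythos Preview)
Your proof is correct and follows essentially the same approach as the paper: sum Lemma~\ref{lem:shift_opt} over $i$, reindex the double sum as $3\sum_j w_j t_{i(j)}$, take expectations using~\eqref{eq:t_i(j)}, and bound $\sum_j w_j C_j$ by $\OPT$. You are simply more explicit about the reindexing and about invoking $Z_{LP}\le\OPT$ at the end, which the paper leaves implicit.
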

\begin{proof}
From Lemma~\ref{lem:shift_opt},
\[\sum_i\OPT_i\le \sum_i \opt|_i+3\sum_i\sum_{j\in J_i}w_jt_i= \OPT+3\sum_j w_j t_{i(j)}.\]
From~\eqref{eq:t_i(j)}, the expected value over $b$ is
\[\expected[\sum_i\OPT_i]\le \OPT+3\sum_j w_j \expected[t_{i(j)}]\le \OPT+\frac{3}{a}\sum_j w_j C_j\le (1+\frac{3}{a})\OPT.
\]
\qed\end{proof}

Remember that a schedule is called tight if no job can be shifted left (scheduled earlier) while maintaining feasibility and without shifting any of the other jobs. 

\begin{definition}\label{def:bounded}
We say that an instance of $1|r_j,prec|\sum w_jC_j$ with job set $J$ is \emph{bounded} if there is some number $L>0$ such that  
$r_{\min}(J)\ge L$ and any tight schedule completes within time $\beta L$ for some constant $\beta$. 
\end{definition}

\begin{theorem}\label{th:decomposition}
For any instance $I$ of $1|r_j,prec|\sum w_jC_j$ and constant $\epsilon>0$ we can find bounded instances $I_1,\dots,I_q$ such that if $\sigma_1,\dots,\sigma_q$ are (randomized) $\alpha$-approximate schedules for $I_1,\dots,I_q$ then the schedule obtained by placing the  $\sigma_i$'s in order $i=1,\dots,q$ is a randomized   $\alpha(1+\epsilon)$-approximate schedule for $I$.
\end{theorem}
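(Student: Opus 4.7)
The plan is to combine the three facts already established by Lemma~\ref{lem:interval} and Lemma~\ref{lem:EOPTi} with a straightforward feasibility check, and then take an expectation. The instances $I_1, \ldots, I_q$ are exactly those produced by Algorithm Decompose, with release dates in $I_i$ lifted to $\max(r_j, 3t_i)$ so that the ``no job starts before $3t_i$'' constraint of Step 3 is encoded directly in the release dates. This lifting preserves consistency of release dates with precedence because all jobs in $J_i$ are lifted by the same amount.

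First I would verify boundedness. Set $L_i = 3t_i$ and $\beta = e^{3/\epsilon}$. By the release-date lifting, $r_{\min}(J_i) \ge 3t_i = L_i$. By Lemma~\ref{lem:interval} (which applies to tight schedules of $I_i$), any tight schedule for $I_i$ completes no later than $3t_{i+1} = e^{3/\epsilon} \cdot 3t_i = \beta L_i$. Hence each $I_i$ is bounded in the sense of Definition~\ref{def:bounded}. Since any $\alpha$-approximate schedule can be made tight without increasing cost, we may assume the $\sigma_i$ returned are tight and therefore live in $[3t_i, 3t_{i+1}]$.

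Next I would check that the concatenation $\sigma$ is feasible for $I$. Non-overlap is immediate: $\sigma_i \subseteq [3t_i, 3t_{i+1}]$ and $\sigma_{i+1} \subseteq [3t_{i+1}, 3t_{i+2}]$ by Lemma~\ref{lem:interval}, so simply taking the union yields a valid single-machine schedule (no shifting is required). Release dates in the original instance $I$ are respected because each $I_i$ only tightens its release dates. For precedence, if $h \prec j$ then constraint~\eqref{eq:LPconstraint1} gives $C_h \le C_j$; thus if $j \in J_i$ then $h \in J_{i'}$ for some $i' \le i$, so $h$ completes in $\sigma$ no later than the beginning of $\sigma_i$ and certainly no later than $j$.

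Finally I would bound the expected cost. For each fixed realization of the random offset $b$ (which determines the partition into $J_1, \ldots, J_q$), the $\alpha$-approximation guarantee applied to $I_i$ gives, taking expectation over the internal randomness of $\sigma_i$,
\[
\expected\!\left[\sum_{j \in J_i} w_j C_j^{\sigma_i}\;\Big|\; b\right] \;\le\; \alpha\, \OPT_i.
\]
Summing over $i$ and taking the outer expectation over $b$, and then invoking Lemma~\ref{lem:EOPTi},
\[
\expected\!\left[\sum_{j \in J} w_j C_j^{\sigma}\right] \;\le\; \alpha\, \expected\!\left[\sum_i \OPT_i\right] \;\le\; \alpha(1+\epsilon)\, \OPT,
\]
which is the claimed bound. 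The only mildly delicate point is the feasibility of the concatenation, which relies on the sharp interval containment of Lemma~\ref{lem:interval}; given that lemma, every remaining step is a one-line inequality.
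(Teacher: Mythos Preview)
Your argument is correct and follows essentially the same route as the paper: boundedness and non-overlap come from Lemma~\ref{lem:interval}, and the cost bound is obtained by summing the $\alpha$-approximation guarantees and applying Lemma~\ref{lem:EOPTi}. Your write-up is in fact more careful about feasibility (precedence across blocks, the two layers of randomness) than the paper's own proof; the only tiny imprecision is the phrase ``lifted by the same amount''---the lift is to $\max(r_j,3t_i)$, not by a uniform additive shift---but consistency with precedence still follows since $\max$ is monotone.
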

\begin{proof}
Each $I_i$ is a bounded instance with $L=3t_i$ and $\beta= e^{3/\epsilon}$. If each schedule $\sigma_i$ is an $\alpha$-approximation for instance $I_i$ then the union is feasible (Lemma~\ref{lem:interval}) and has expected value (Lemma~\ref{lem:EOPTi}) at most 
\[\sum_i\expected[\alpha\OPT_i]=\alpha\sum_i\expected[\OPT_i]\le \alpha(1+\epsilon)\OPT.
\]
By Lemma~\ref{lem:interval} there is no overlap in the schedules $\sigma_i$. Hence Note that the union is indeed 
\qed\end{proof}\bigskip

The theorem implies that any (randomized) polynomial time $\alpha$-approximation algorithm  for bounded instances yields a randomized polynomial time  $\alpha(1+\epsilon)$-approximation for general instances.
If the algorithm for the bounded instances is deterministic then we can easily derandomize the combined algorithm by discretizing the probability distribution for $b\in [0,a]$. We show in the next section how to get a deterministic $\alpha$-approximate schedule for bounded instances with $\alpha=2(1+\epsilon)$.

\section{Algorithm for bounded instances.}\label{sec:bounded}
In this section we restrict to bounded instances as defined in Definition~\ref{def:bounded}. Apart from that definition, the analysis here is independent of Section~\ref{sec:decompose}. Hence, let $I$ be any bounded instance with parameters $\beta$ and $L$. 

The main idea of the algorithm is to guess enough information about an (unknown) optimal schedule for $I$ such that the algorithm LP+LS of Section~\ref{sec:LP+LS} yields a $2(1+\epsilon)$-approximate schedule. 
To restrict the number of guesses we first observe (Section~\ref{sec:restrictOPT}) that we only need to consider a nearly optimal schedule $\sigma'$ in which each job $j$ starts at a time that is a multiple of $\epsilon p_j$. Say that a job $j$ is \emph{early} in $\sigma'$ if it starts at time $S'_j<p_j$. 
We will guess the start time of each early job.  The second observation is that the number of early jobs is $O(\log \beta)$ (Lemma~\ref{lem:early}).
With these two observations, the number of guesses is polynomially bounded. For each guess, we adjust release times of jobs in correspondence with our guess and run algorithm LP+LS. The final solution is the best schedule over all guesses. 

\bigskip

\noindent \textsc{Algorithm Bounded:}\\ 
Guess the set $A\subseteq J$ of jobs that are early in the  (near) optimal schedule $\sigma'$ and for each $j\in A$ guess its start time. For each guess, adjust the release times $r_j\rightarrow r'_j$ and run algorithm LP+LS. Let $\sigma$ be the best schedule over all possible guesses.

\subsection{Restricting the optimal schedule.}\label{sec:restrictOPT} Let $\OPT$ be the optimal value for the bounded instance $I$. Consider some (unknown and tight) optimal schedule $\sigma^*$ and let $C^*_1< \dots < C^*_n$ be the completion times. Assume we shift jobs one by one (starting with job 1) such that the start time of each job $j$ is a multiple of $\epsilon p_j$. Let this schedule be $\sigma'$. Then, for any $j$ the new completion time is 
\[C_j'\le C^*_j+\sum_{k\le j}\epsilon p_k=C^*_j+\epsilon \sum_{k\le j} p_k\le (1+\epsilon) C^*_j.\] 
Let $\OPT'$ be the value of $\sigma'$. We have the following properties.
\begin{enumerate}[(i)]
\item $\OPT'\le (1+\epsilon)\OPT$.
\item The start time $S'_j$ of job $j$ is a multiple of $\epsilon p_j$.
\item All jobs are schedule in the interval $[L,(1+\epsilon)\beta L]$.
\end{enumerate}
From now, let $\sigma'$ be our (unknown) near-optimal schedule and let $\OPT'$ be its value. 
We will show how to get a schedule of value at most $2\OPT'$.

\subsection{Guessing the optimal schedule.} \label{subsec:guessing} The first step of the algorithm is to make guesses about $\sigma'$. 
Let $S'_j$ be the start time of job $j$ in $\sigma'$. Say that a job $j$ is \emph{early}  in $\sigma'$ if $S'_j<p_j$.

\begin{lemma}\label{lem:early} The number of early jobs  in $\sigma'$ is $O(\log \beta)$. 
\end{lemma}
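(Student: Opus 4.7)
The plan is a simple geometric doubling argument on the start times of early jobs.

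First I would record the key constraints an early job $j$ must satisfy. Since the instance $I$ is bounded, every job satisfies $S'_j \ge r_j \ge L$. If $j$ is early then by definition $S'_j < p_j$, so $p_j > L$ and the completion time satisfies $C'_j = S'_j + p_j > 2 S'_j$. On the other hand, by property (iii) of $\sigma'$ (all jobs scheduled in $[L,(1+\epsilon)\beta L]$), we have $C'_j \le (1+\epsilon)\beta L$, and in particular $S'_j \le (1+\epsilon)\beta L$ for every early job.

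Next I would order the early jobs by start time as $j_1,j_2,\dots,j_k$ with $S'_{j_1} < S'_{j_2} < \dots < S'_{j_k}$. Because we are on a single machine, consecutive jobs in this order do not overlap, so $S'_{j_{i+1}} \ge C'_{j_i}$. Since $j_i$ is early, $C'_{j_i} > 2 S'_{j_i}$, and therefore
\[
S'_{j_{i+1}} \;>\; 2\, S'_{j_i} \qquad \text{for } i = 1,\dots,k-1.
\]
Iterating and using $S'_{j_1} \ge L$ gives $S'_{j_k} > 2^{k-1} L$. Combining with the upper bound $S'_{j_k} \le (1+\epsilon)\beta L$ yields $2^{k-1} < (1+\epsilon)\beta$, hence $k = O(\log \beta)$ for constant $\epsilon$.

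There is no real obstacle here; the only thing to be careful about is that the doubling inequality applies to arbitrary pairs of early jobs (not necessarily consecutive on the machine), which follows immediately from the non-overlap property of single-machine schedules once we sort by $S'_j$. The lower bound $S'_j \ge L$ is essential — it is exactly where the ``bounded'' hypothesis $r_{\min}(J) \ge L$ is used — and the upper bound $(1+\epsilon)\beta L$ comes from property (iii) of the near-optimal schedule $\sigma'$.
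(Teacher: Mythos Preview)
Your proof is correct and follows essentially the same approach as the paper: both derive $C'_j > 2S'_j$ for every early job and then use a geometric doubling argument on the interval $[L,(1+\epsilon)\beta L]$ to bound the number of early jobs by $\log_2((1+\epsilon)\beta)$. The paper's proof is simply a more terse presentation of the same idea, while you spell out the ordering and the inequality $S'_{j_{i+1}} \ge C'_{j_i}$ explicitly.
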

\begin{proof}
If $j$ is  an early job in $\sigma'$ then $C'_j >2S'_j$. Hence, the number of early jobs is bounded by 
\[\log_2\left(\frac{(1+\epsilon)\beta L}{L}\right)=\log_2((1+\epsilon)\beta)=O(\log \beta).\]
\qed\end{proof}

Our algorithm  guesses the set $A\subseteq J$ of early jobs in $\sigma'$ and for each early job $j$ we guess its start time $S_j'$  in $\sigma'$. If $j$ is early then there are at most $1/\epsilon$ possibilities to consider since $S'_j<p_j$ and $S'_j$ is a multiple of $\epsilon p_j$. Hence, the total number of guesses is bounded by $(n/\epsilon)^{O(\log \beta)}$.

\subsection{Adjusting release times: $r_j\rightarrow r'_j$}\label{subsec:adjustingLP}
We describe this step under the assumption that our guess about $\sigma'$ is correct, i.e., $A\subseteq J$ is the set of early jobs in $\sigma'$. We increase release times step by step such that the following properties hold. Let $r'_j\ge r_j$ be the new release times and $I'$ the new instance.
\begin{enumerate}[(a)]
\item Schedule $\sigma'$ is feasible for $I'$
\item if $j\in A$ then $r'_j\ge S'_j$  
\item if $j\notin A$ then $r'_j\ge p_j$  
\item if $j\prec k$ then $r'_j\le r'_k$
\end{enumerate}
Let $T$ be the set of \emph{open} time intervals at which an early job is processed, i.e., 
\[T=\mathop{\bigcup}\limits_{j\in A} \left]S'_j,S'_j+p_j\right[.\]
No job starts at a time $t\in T$ in $\sigma'$ so we may increase release times further such that 
\begin{enumerate}[(e)]
\item $r'_j\notin T$ for any $j\in J$
\end{enumerate}
The increase due to (e) may give a conflict with (d) causing a sequence of increases by rules (d) and (e). Clearly, this process ends after a polynomial number of iterations. Note that $|T|$ consists of only $O(\log \beta)$ intervals. A rough upper bound on the number of iterations is $O(n^2\log\beta)$. Hence, we can find release times $r'_j$ such that (a)-(e) hold.

\subsection{The analysis.}

\begin{theorem}\label{lem:2-approx}
Algorithm \textsc{Bounded} returns a $2(1+\epsilon)$ approximate solution for bounded instances.
\end{theorem}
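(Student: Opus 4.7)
My plan is to analyze the correct guess, in which $A$ coincides with the set of early jobs of $\sigma'$ and each guessed start time equals $S'_j$. For this guess I will prove $C_j^\sigma \le 2 C_j$ for every $j$, where $C_j$ is the LP-optimum on the modified instance $I'$. Taking the weighted sum then yields $\sum_j w_j C_j^\sigma \le 2 Z_{LP}(I') \le 2 \OPT' \le 2(1+\epsilon)\OPT$, using property~(a) (so that $\sigma'$ is feasible for the LP of $I'$) and the bound $\OPT'\le(1+\epsilon)\OPT$ from Section~\ref{sec:restrictOPT}. Since the algorithm returns the best schedule over all guesses, the same bound is achieved by its output.

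Fix a job $j$ and invoke Lemma~\ref{lem:lemma} to produce the block $[t, C_j^\sigma]$ with job set $U$ and the inequality $C_j^\sigma \le t + 2 C_j - 2 r_{\min}(U)$. It therefore suffices to show $t \le 2 r_{\min}(U)$. If no job completes at $t$, bound~\eqref{eq:lem_eq2} yields $C_j^\sigma \le 2 C_j$ directly. Otherwise some $k > j$ completes at $t$ with start $s$, and~\eqref{eq:lem_eq3} gives $r_{\min}(U) > s$.

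I then split on whether $k \in A$. If $k \notin A$, property~(c) gives $r'_k \ge p_k$, so $p_k \le s$ and $t = s + p_k \le 2 s < 2 r_{\min}(U)$, exactly as in the small-jobs case. If $k \in A$, property~(b) gives $s \ge r'_k \ge S'_k$; for every $h \in U$ we have $r'_h > s \ge S'_k$, and property~(e) forbids $r'_h$ from lying in the open interval $]S'_k, C'_k[$, hence forces $r'_h \ge C'_k = S'_k + p_k$. Thus $r_{\min}(U) \ge C'_k$. If additionally $s \ge C'_k$, then $p_k \le s$ closes the argument as before; otherwise $s < C'_k$ and $t = s + p_k < C'_k + p_k \le 2 C'_k \le 2 r_{\min}(U)$.

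The main obstacle I expect is this last sub-case ($k \in A$ with $s < C'_k$), where the small-job inequality $p_k \le r'_k$ is unavailable because $k$ is early in $\sigma'$. The essential idea is that the forbidden-interval property~(e) of the modified release times pushes every release time of $U$ past $C'_k$, so $r_{\min}(U)$ is inflated enough to absorb $p_k$. With this structural observation identified, the remaining arithmetic is routine.
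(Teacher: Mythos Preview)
Your proof is correct and follows essentially the same route as the paper: invoke Lemma~\ref{lem:lemma}, then use properties (b), (c), and (e) of the modified release times to force $t\le 2r'_{\min}(U)$ in every case. The only cosmetic difference is the case split---the paper branches on $t\le 2s$ versus $t>2s$ (and then observes that $t>2s$ forces $k\in A$ via (c)), whereas you branch directly on $k\notin A$ versus $k\in A$ and add a further sub-split on $s\ge C'_k$; the underlying inequalities are identical.
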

\begin{proof}
Assume that we guessed the information about $\sigma'$ correctly. Let $Z_{LP}$ be the LP value obtained. Then $Z_{LP}\le \OPT'$. 
We will show that $C^{\sigma}_j\le 2C_j$ for any job $j$, where $C_j$ is the optimal LP-value  for $I'$. Then the theorem follows by taking the weighted sum over all jobs: 
\[\sum_j w_jC^{\sigma}_j\le 2\sum_j w_jC_j=2Z_{LP}\le 2\OPT'\le 2(1+\epsilon)\opt.\] 
Consider schedule $\sigma$ returned by algorithm \textsc{Bounded}. Let $j$ be an arbitrary job and apply Lemma~\ref{lem:lemma} to $\sigma$. If no job completes at time $t$ then $C_j^{\sigma}\le 2C_j$. 
Now assume some job $k$ completes at time $t$ and let $s$ be its start time. If $t\le 2s$ then from~\eqref{eq:lem_eq1} and~\eqref{eq:lem_eq3} , 
$C_j^{\sigma}\le t+2C_j-2r'_{\min}(U) < t+2C_j-2s\le 2C_j$. Hence, assume from now that 
\begin{equation}\label{eq:t>2s}
t>2s.
\end{equation}

\begin{center}
\begin{tikzpicture}[scale=0.65]
\draw[line width=0.6pt]  (-2.5,0) rectangle (1,1);
\draw[line width=0.6pt]  (1,0) rectangle (2,1);
\draw[line width=0.6pt] (2,0) rectangle (3.5,1);
\draw[line width=0.6pt] (3.5,0) rectangle (4.5,1);
\draw[line width=0.6pt] (4.5,0) rectangle (6.5,1);

\draw[line width=0.6pt] (-4,0) -- (7,0);
\node at (-4,-0.25) { 0}; 
\node at (-2.5,-0.25) { $s$ };
\node at (1,-0.25) { $t$};
\node at (5.5,0.5) { $j$}; 
\node at (-0.75,0.5) { $k$};
\node at (6.5,-0.5) { $C_j^{\sigma}$};

\draw[decorate,decoration={brace,amplitude=7pt}] (1,1.2)  -- (6.5,1.2) ; 
\node at (3.75,1.8) { $U$};
\end{tikzpicture}\\
\end{center}Then, $p_k=t-s>s\ge r_k'$ and by (c) we must have $k\in A$, i.e.,  $k$ is an early job. Since $k\in A$ we have (from (b)) $S'_k\le r_k'\le s$ and   (from~\eqref{eq:lem_eq3})
\begin{equation}\label{eq:rmin>Sk'}
r'_{\min}(U)>s\ge S'_k.
\end{equation}
 Also, since $k\in A$, we know from  (e) that $r'_{\min}(U)\notin\  ]S'_k,S'_k+p_k[$. Together with~\eqref{eq:rmin>Sk'} and~\eqref{eq:t>2s}
we get that
\[r'_{\min}(U)\ge S'_k+p_k\ge p_k=t-s>t/2.\]
Again using~\eqref{eq:lem_eq1} we conclude that 
 \[C_j^{\sigma}\le t+2C_j-2r'_{\min}(U)<2C_j.\] 
\qed\end{proof}
\begin{lemma}\label{lem:polytime}
Algorithm \textsc{Bounded} runs in polynomial time.
\end{lemma}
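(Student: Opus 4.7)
The plan is to verify the two components that could blow up the running time: the number of guesses and the cost of handling each guess, and then multiply. I would begin by bounding the number of guesses. By Lemma~\ref{lem:early}, the set $A$ of early jobs in $\sigma'$ has size $k = O(\log \beta)$, so the number of candidate subsets $A \subseteq J$ is at most $\sum_{i \le k}\binom{n}{i} = n^{O(\log \beta)}$. For each $j \in A$, since $S'_j$ is a multiple of $\epsilon p_j$ and $S'_j < p_j$, there are at most $1/\epsilon$ possible start times, giving $(1/\epsilon)^{O(\log \beta)}$ choices overall. Thus the total number of guesses is $(n/\epsilon)^{O(\log \beta)}$, as already claimed in Section~\ref{subsec:guessing}. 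Because the instance is bounded with constant $\beta$ (in the intended application $\beta = e^{3/\epsilon}$, fixed once $\epsilon$ is fixed), this quantity is polynomial in $n$.

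Next I would argue that the work per guess is polynomial. The release-time adjustment of Section~\ref{subsec:adjustingLP} proceeds by repeatedly applying rules (d) and (e); since the forbidden set $T$ consists of only $O(\log \beta)$ open intervals and each iteration strictly increases some $r'_j$ to the next endpoint in $T$ or to some $r'_k$ with $k \succ j$, the process terminates in $O(n^2 \log \beta)$ steps, as noted in the text. After that, algorithm LP+LS requires (i) solving the linear program (LP), which by~\cite{Goemans96} can be done in polynomial time via submodular function minimization over the exponentially many constraints~\eqref{eq:LPconstraint2}, and (ii) executing list scheduling, which is trivially polynomial. Relabelling and the feasibility checks in Lemma~\ref{lem:LSprop} also cost polynomial time.

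Finally I would put the two bounds together: polynomially many guesses times polynomial time per guess, plus the polynomial-time selection of the best schedule produced, gives an overall polynomial running time. The only subtlety worth flagging — and arguably the main point one should make explicit — is the dependence on $\beta$: the bound $n^{O(\log \beta)}$ is polynomial only because $\beta$ is a constant depending on $\epsilon$ alone, so the proof really relies on the boundedness hypothesis from Definition~\ref{def:bounded}. No step requires a nontrivial argument beyond this observation, so I expect the proof to be short, with the main obstacle being merely to state the guess count carefully and to cite the polynomial-time solvability of (LP).
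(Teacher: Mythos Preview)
Your proposal is correct and follows essentially the same approach as the paper: bound the number of guesses by $(n/\epsilon)^{O(\log\beta)}$, observe that adjusting release times, solving (LP), and list scheduling are each polynomial per guess, and multiply. You are slightly more explicit than the paper in emphasizing that polynomiality hinges on $\beta$ being a constant, which is a welcome clarification.
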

\begin{proof}
The number of guesses to consider is $(n/\epsilon)^{O(\log \beta)}$. For each guess, adjusting the release times takes $O(n^2/\epsilon)$ time. Also, list scheduling and the LP run in poynomial time, as mentioned earlier in Section~\ref{sec:LP+LS}. The total running for 
algorithm \textsc{Bounded} is $(n/\epsilon)^{O(\log \beta)}$.
 \qed \end{proof}

\section{Conclusion and remarks.}

\paragraph{Total running time.}
The linear program (LP) is solved once to partition an instance $I$ into $O(n)$ instances $I_i$. For each $I_i$ the algorithm takes  $(n/\epsilon)^{O(\log \beta)}$ time  where $\log \beta=O(1/\epsilon)$. Hence, the total running time is $(n/\epsilon)^{O(1/\epsilon)}$.

\paragraph{Reducing the running time.}
We can reduce the total running to $f(\epsilon)p(n)$ for some function $f$ and polynomial $p$ by rounding 
the processing times up to powers of $1+\epsilon$. This reduces the number of guesses needed substantially. Say that job $j$ is of \emph{type} $i$ if its rounded processing $p'_j$ is $(1+\epsilon)^i$. Assume that in the near optimal schedule $\sigma'$ the processing times are rounded.  
Note that in $\sigma'$ there is at most one early job of each type. In stead of guessing all early jobs it is enough to guess which of the types do have an early job. Let $S^{(i)}$ be the smallest start time among the jobs of type $i$ in $\sigma'$.  
Say that type $i$ is early if $S^{(i)}<(1+\epsilon)^i$. Let $B$ be the types that are (guessed) to be early. 
Let $I'$ be the instance for adjusted release times $r'_j$ defined by the following rules.
\begin{enumerate}[(a)]
\item Schedule $\sigma'$ is feasible for $I'$
\item if $i\in B$ and $j$ is of type $i$ then $r'_j\ge S^{(i)}$  
\item if $i\notin B$ and $j$ is of type $i$ then $r'_j\ge p'_j$  
\item if $j\prec k$ then $r'_j\le r'_k$
\item $r'_j\notin T$ for any $j\in J$
\end{enumerate}
Here, $T$ is again the set of \emph{open} time intervals at which an early job is processed, i.e., 
\[T=\mathop{\bigcup}\limits_{i\in B} \left]S^{(i)},S^{(i)}+(1+\epsilon)^i\right[.\]
The analysis is exactly the same except for the bound on the number of guesses. Note that if type $i$ is early then $(1+\epsilon)^i>L$ since no job starts before time $L$. Further, we must have  $(1+\epsilon)^i<(1+\epsilon)^2\beta L$ since no job completes after time $(1+\epsilon)^2\beta L$ in $\sigma'$. Hence, we only need to consider a range of $O(\log_{(1+\epsilon)} \beta)$ values for $i$. The number of guesses is bounded by $(1/\epsilon)^{O(\log_{(1+\epsilon)}\beta)}=(1/\epsilon)^{O((\log \beta)/\epsilon)}$.

\paragraph{No release times.}
The decomposition theorem of Section~\ref{sec:decompose} is still meaningful in the absence of release times. It then states that an $\alpha(1+\epsilon)$-approximation for the problem $1|prec|\sum_j w_jC_j$ follows from  an $\alpha$-approximation for instances of $1|r_j,prec|\sum_j w_jC_j$ in which all jobs have the same release time $r_j=L$ where $L=\Omega(\sum_j p_j)$. Although this does not seem helpful to get below an approximation ratio of $2$ in general, it can be useful for special type of precedence constraints. Moreover, a similar decomposition can be useful for other optimization problems with total completion time objective.   

\bibliographystyle{abbrv}
\bibliography{Bib}

\end{document}